\newtheorem{theorem}{Theorem}[section]
\newtheorem{conjecture}[theorem]{Conjecture}
\newtheorem{claim}[theorem]{Claim}
\newcommand{\be}{\begin{equation}}
\newcommand{\ee}{\end{equation}}
\newcommand{\bea}{\begin{eqnarray}}
\newcommand{\eea}{\end{eqnarray}}
\newcommand{\beann}{\begin{eqnarray*}}
\newcommand{\eeann}{\end{eqnarray*}}
\newcommand{\ket}[1]{\vert{#1}\rangle}
\newcommand{\bra}[1]{\langle{#1}\vert}
\newcommand{\unity}{{1\hskip -3pt \rm{I}}}
\begin{document}
\title{Classical capacity of a qubit depolarizing memory channel with {M}arkovian correlated noise}
\author{
    Jaideep Mulherkar\\
    Dhirubhai Ambani Institute of \\
    Information and Communication Technology\\
    jaideep\_mulherkar@daiict.ac.in
}
\maketitle
\begin{abstract}
We study the classical capacity of a forgetful quantum memory channel that switches between two qubit depolarizing channels according to an ergodic Markov chain. The capacity of this quantum memory channel depends on the parameters of the two depolarizing channels and the memory of the ergodic Markov chain. When the number of input qubit's is two, we show that depending on channel parameters either the maximally entangled input states or product input states achieve the classical capacity. Our conjecture based on numerics is that as the number of input qubits are increased the classical capacity approaches the product state capacity for all values of the parameters.
\end{abstract}
{\small \bf Keywords:} Quantum memory channel, classical capacity, hidden Markov model, entropy rate
\section{Introduction}
\label{sec:Introduction}
Reliable transmission of classical information over quantum channels is an important problem in quantum information theory. The maximum amount of classical information that can be reliably transmitted over a quantum channel is called the classical capacity of the channel. Quantum memory channels model many physical situations where the noise effects are correlated. An unmodulated spin chain \cite{BBMB2008} and a micro maser \cite{GDF2009} have been proposed as physical models of quantum channels with memory effects. In recent years there has been a lot of interest in quantum channels with memory \cite{CGLM2014}. Capacity formulas for various classes of quantum memory channels were derived in \cite{KW2005}. As in the memoryless case, an important question is whether entangled inputs enhance the communication capacity of a quantum memory channel. For Pauli channels with memory, it was shown that below a certain threshold value of the noise correlation parameter, the maximally entangled states provide the optimal two-use classical capacity \cite{D2006}. An experimental demonstration of enhancement of classical information by entangling qubits for correlated Pauli channels was provided in \cite{BDWR2004}. In bosonic continuous variable memory channels also it is known that entangled inputs enhance capacity \cite{CCRM2006}.

A subclass of  quantum memory channels are forgetful channels where the effects of the initial memory configuration are forgotten over time. In this paper we look at a forgetful quantum memory channel where the noise correlations come from a classical ergodic Markov chain. Depending on the state of the Markov chain, one of the two depolarizing channels gets applied to the input qubit. This channel was first studied in \cite{WAFP2009} and the product state capacity was derived in terms of the entropy rate of a hidden Markov process . In the current work we investigate whether the classical capacity of this channel is enhanced if entangled inputs are allowed and give a partial answer to this question. We show that when the input consists of two qubits the channel capacity is achieved by either the maximally entangled input states or product input states depending on an explicit function of the parameters of the channel. Secondly, our numerical results point out that as the number of inputs qubits are increased the classical capacity approaches the product state capacity. In section \ref{sec:ChannelConstruction} we give a background on quantum memory channel, show the channel construction and review the result on the product state capacity of the channel. In section \ref{sec:EntEnhancement} show that for two qubits the classical information carrying capacity of maximally entangled states is better than the product states. Finally in section \ref{sec:ConjClassicalCap} we present numerical evidence to support that the product state capacity of this channel is equal to its classical capacity.

\section{Construction of the channel and its product state capacity}
\label{sec:ChannelConstruction}
A quantum channel is a completely positive trace preserving map $\Lambda:\mathcal{A}\rightarrow \mathcal{B}$ where $\mathcal{A}$ and $\mathcal{B}$ are the observable algebras of the input and output systems, respectively. Memoryless channels are channels where the noise acts independently on each input state. Multiple uses of a memoryless channel is given by the tensor product $\Lambda^{\otimes n}:\mathcal{A}^{\otimes n}\rightarrow \mathcal{B}^{\otimes n}$. The classical capacity of a quantum channel is the maximum rate at which classical information can be transmitted over the channel. The one shot classical capacity of a quantum channel is given by the Holevo capacity \cite{H2002}.
\beann
\chi^*(\Lambda) &=& \max_{\{p_i,\rho_i\}} S\big(\sum_i p_i\Lambda(\rho_i))- \sum_i p_i S(\Lambda(\rho_i)\big)  
\eeann
where $S(\rho) = -Tr(\rho \log(\rho))$ is the Von Neumann entropy of the state $\rho$ and the maximum is taken over all possible input ensembles. This is the capacity of the quantum channel when only product state encoding is allowed and is also known as the product state capacity. The n-use classical capacity by allowing entangled inputs is the amount of classical information that can be reliably transmitted per channel use is given by
\beann
C_n(\Lambda) = \frac{1}{n} \chi^*(\Lambda^{\otimes n})
\eeann
The classical capacity of the channel is given by 
\beann
C_{class}(\Lambda) = \lim_{n\rightarrow\infty} C_n
\eeann
Proving the additivity of the Holevo capacity was one of the most important problems in quantum information theory over the past decade. Additivity implies that entangled inputs cannot enhance the rate of information transmission. Additivity of the Holevo capacity has been shown to be true in depolarizing, unital and entanglement breaking channels \cite{K2010}. The superadditivity of the Holevo capacity in certain higher dimensional quantum channels was show by Hastings in \cite{H2009} thus disproving the additivity conjecture.

When a tensor product structure of the multi use of the channel does not hold we fall in the regime of channels with memory. Amongst the physically relevant models, we follow the non-anticipatory model of quantum memory channel given in \cite{CGLM2014}. In this model of quantum memory channels, besides the input and output, there is a third system $\mathcal{M}$ which  represents the state of the memory. The channel operates on the input state and the state of the memory resulting in an output state and a new state of the memory. Thus a quantum memory channel is represented by a CPTP map $\Gamma:\mathcal{M}\otimes \mathcal{A}\rightarrow \mathcal{B} \otimes \mathcal{M}$. The action of n successive uses of this channel $\Gamma$ is  given by the channel $\Gamma_n:\mathcal{A}^{\otimes n} \otimes \mathcal{M}\rightarrow \mathcal{M} \otimes \mathcal{B}^{\otimes n}$ where 
\beann
\Gamma_n = (\unity_{\mathcal{B}}^{\otimes n-1} \otimes \Gamma)\circ (\unity_{\mathcal{B}}^{\otimes n-2} \otimes \Gamma\otimes \unity_{\mathcal{A}})\circ\cdots\circ(\Gamma\otimes\unity_{\mathcal{A}}^{\otimes n-1})
\eeann
The final output state can be determined by performing a partial trace over the memory system. A quantum channel is called forgetful if the memory behavior does not depend on the initial memory configuration. That is, if for any input state $\rho^{(n)}$  and $\epsilon > 0$ there exists an $N(\epsilon)$ such that for all $n \geq N(\epsilon)$
\beann
D(\Gamma_n(\rho^{(n)} \otimes \omega),\Gamma_n(\rho^{(n)} \otimes \omega^{'})) &<& \epsilon
\eeann
for any pair of initial memory $\omega, \omega^{'}$, and $D$ is the trace distance
\beann
D(\rho_1,\rho_2) := \frac{1}{2}\|\rho_1-\rho_2\|_1
\eeann
A depolarizing channel $\Lambda:M^d\rightarrow M^d$ is a quantum channel that retains its state with probability $x^0$ and moves to the completely mixed state with probability $1-x^0$
\beann
\Lambda(\rho) = x^0\rho + (1-x^0)\frac{\unity}{d}
\eeann
This map is completely positive for $-\frac{1}{3} \leq x^0 \leq 1$. In this paper we look at a special case of quantum memory channels where the state of the memory transits between state `0' and state `1' according to an ergodic Markov chain with transition matrix $E$. If the memory state is `0' then a depolarizing channel $\Lambda_0$ is applied and if the memory state is `1' then a depolarizing channel $\Lambda_1$ is applied where
\beann
\Lambda_0(\rho) = x_0^0\rho + (1-x_0^0)\frac{\unity}{d}  \quad \text{and} \quad \Lambda_1(\rho) = x_1^0\rho + (1-x_1^0)\frac{\unity}{d}
\eeann
The action of this quantum memory channel a state $\rho$ can be written as
\beann
\Gamma(\mu \otimes \rho) =  \Lambda_0(\rho)\otimes (E\mu)_0 + \Lambda_1(\rho)\otimes (E\mu)_1
\eeann
Successive application of this channel  results in the n qubit channel
\bea
\Gamma_n = \rho_1 \otimes \rho_2 \otimes \cdots\otimes \rho_n  \mapsto \sum_{i_1,i_2,...,i_n}\gamma_{i_1} p_{i_1 i_2}...p_{i_{n-1} i_n} \Lambda_{i_1}(\rho_1)\otimes \cdots \otimes \Lambda_{i_n}(\rho_n) 
\eea
For forgetful memory channels it was shown \cite{KW2005,DD2009} that the classical capacity of forgetful quantum memory channels is given by the regularized Holevo capacity
\begin{equation}
\label{eq:qmccapacity}
\begin{aligned}
C_{class}(\Gamma) &= \lim_{n \rightarrow \infty} \frac{1}{n}\chi^*(\Gamma_n)\\
\chi^*(\Gamma_n) &=  \max_{\{p_i^{(n)},\rho_i^{(n)}\}} S\Big(\sum_i p_i^{(n)}\Gamma_n(\rho_i^{(n)})\Big)- \sum_i p_i^{(n)}S\Big(\Gamma_n(\rho_i^{(n)})\Big)
\end{aligned}
\end{equation}
It was shown in \cite{WAFP2009} that the product states of the form 
\bea
\rho^{(n)}_{\vec{i}} = \ket{i_1}\bra{i_1}\otimes\cdots \otimes \ket{i_n}\bra{i_n} 
\eea
with $\vec{i} = \{i_1,i_2,...,i_n\}$ minimize the output entropy and the product state capacity $C_{prod}(\Gamma)$ is given by the following theorem 
\begin{theorem}[\cite{WAFP2009}]
The product state capacity of the quantum memory channel $\Gamma$ is given by
\beann
C_{prod} = 1 - \lim_{n\rightarrow \infty} \frac{S(\Gamma_n(\rho^{(n)}_{\vec{i}}))}{n}
\eeann
\end{theorem}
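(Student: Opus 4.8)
The plan is to evaluate the Holevo quantity $\chi^*(\Gamma_n)$ restricted to product-state ensembles \emph{exactly} for each finite $n$, then divide by $n$ and let $n\to\infty$. Write $S_n := S(\Gamma_n(\rho^{(n)}_{\vec i}))$. A preliminary observation is that $S_n$ is independent of the basis string $\vec i$: conjugating the $k$-th tensor factor by the Pauli bit flip whenever $i_k=1$ is a product unitary $U=\bigotimes_k U_k$, each depolarizing channel $\Lambda_j$ commutes with unitary conjugation, so $\Gamma_n(U\sigma U^\dagger)=U\,\Gamma_n(\sigma)\,U^\dagger$, and $S$ is unitarily invariant. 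I would record this covariance (for product inputs) as the first lemma.

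For the upper bound, note that for any product-state ensemble $\{p_i,\rho^{(n)}_i\}$ one has $S(\sum_i p_i\Gamma_n(\rho^{(n)}_i))\le n$, since the output lives in dimension $2^n$, so $\chi^*(\Gamma_n)\le n-\min_{\sigma\ \text{product}}S(\Gamma_n(\sigma))$, and it remains to identify that minimum. By the covariance lemma every product of pure qubit states is mapped by a product unitary to $\rho^{(n)}_{\vec 0}$ without changing the output entropy, so all pure product inputs give output entropy exactly $S_n$; for a product of \emph{mixed} states, expanding each factor in its spectral decomposition and using linearity of $\Gamma_n$ together with concavity of $S$ gives $S(\Gamma_n(\sigma))\ge S_n$. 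Hence the minimum output entropy over product inputs equals $S_n$, and $\chi^*(\Gamma_n)\le n-S_n$.

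For the matching lower bound, take the uniform ensemble on the $2^n$ basis strings, $p_{\vec i}=2^{-n}$. Each output has entropy $S_n$, while the average input is $\bigotimes_k(\unity/2)=\unity/2^n$ and, since $\Lambda_j(\unity/2)=\unity/2$, the average output is $\Gamma_n(\unity/2^n)=\unity/2^n$ with entropy $n$. So this ensemble attains $n-S_n$, giving $\chi^*(\Gamma_n)=n-S_n$ exactly; dividing by $n$ and taking the limit yields $C_{prod}=1-\lim_n S_n/n$. The limit exists because $\Lambda_j$ sends computational-basis states to classical bit flips with flip probability $(1-x_j)/2$, so the diagonal of $\Gamma_n(\rho^{(n)}_{\vec 0})$ is the law of a stationary hidden Markov process (the all-zero string passed through the $E$-modulated bit-flip channel); hence $S_n=H(Y_1,\dots,Y_n)$ is subadditive and $S_n/n$ converges by Fekete's lemma, consistent with the entropy-rate description in \cite{WAFP2009}.

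I expect the only genuinely non-routine point to be the minimum-output-entropy claim; the argument above shows it reduces to the unitary covariance of depolarizing channels (propagated through the memory structure of $\Gamma_n$) together with concavity of the von Neumann entropy, so the apparent difficulty is mild. Everything else — the average state being maximally mixed, the isospectrality of the basis-string outputs, and the existence of the entropy-rate limit — is bookkeeping with the classical additive-noise / hidden-Markov structure.
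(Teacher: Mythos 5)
Your proposal is correct. The paper imports this theorem from \cite{WAFP2009} without giving a proof, offering only the two remarks that computational-basis product inputs minimize the output entropy and that the resulting limit is the entropy rate of a hidden Markov process; your argument — unitary covariance of the depolarizing branches plus concavity to identify the minimum output entropy over product inputs as $S_n$, unitality to make the averaged output maximally mixed so the uniform basis ensemble attains $n-S_n$ exactly, and stationarity/subadditivity for the existence of the limit — is a correct, self-contained derivation along exactly those lines.
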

Moreover, the output state has the following  eigen decomposition
\beann
\Gamma_n(\rho^{(n)}_{\vec{i}}) = \sum_{\vec{k}}  \lambda_{\vec{k} } \rho_{\vec{i}\oplus\vec{k}} 
\eeann
where the eigen values $\lambda_{\vec{k} }$ are given by
\bea
\label{eq:Chouteigenvalues1}
\lambda_{\vec{k} }= \gamma_{k_1} p_{k_1 k_2}...p_{k_{n-1} k_n} \alpha_{i_1}^{k_1}... \alpha_{i_n}^{k_n}
\eea
where
\beann 
\alpha_i^0 = \frac{1+ x_i^0}{2} , \, \alpha_i^1 = \frac{1- x_i^0}{2}
\eeann
From \ref{eq:Chouteigenvalues1} we see that the limit $\frac{S(\Gamma_n(\rho^{(n)}_{\vec{i}}))}{n}$ is equal to the entropy rate of a hidden Markov process with stationary measure $\lambda_{\vec{k} }$. The observed process is whether a bit was flipped and the hidden process is which of the two depolarizing channel was used. Efficient computation of the entropy rate of a hidden Markov process is a longstanding problem in classical information theory. If one of the two depolarizing channels is a perfect channel then the capacity of this quantum memory channel can be computed efficiently by the techniques of algebraic measures\cite{FNS1992} which was shown in \cite{MMN2012,M2014}.

\section{Entanglement enhanced communication with two inputs}
\label{sec:EntEnhancement}
In this section we focus on the question whether the capacity given by equation (\ref{eq:qmccapacity}) is enhanced by using entangled inputs when the input consists of two qubits. The channel we study comes under the category of Pauli channels.  A single qubit Pauli channel is given by the mapping
\beann
\rho\mapsto \sum_i q_i \sigma_i\rho\sigma_i
\eeann
where $\sum_i q_i =1$ and $\sigma_{i}'s$ are the Pauli matrices given by
\begin{align*}
\sigma_0 &=
\begin{pmatrix}
1&0\\0&1
\end{pmatrix} &
\sigma_1 &=
\begin{pmatrix}
0&1\\1&0
\end{pmatrix} \\
\sigma_2 &=
\begin{pmatrix}
0&-i\\i&0
\end{pmatrix} &
\sigma_3 &=
\begin{pmatrix}
1&0\\0&-1
\end{pmatrix}
\end{align*}
A memoryless two qubit Pauli channel consists of two independent uses of the single qubit channel. In a quantum memory channel the Pauli rotations are not independent. An example of a well studied two qubit quantum memory channel is the following
\beann
\rho\mapsto \sum_{ij}p_{ij}\sigma_i\otimes\sigma_j
\eeann
where
\beann
p_{ij} = (1-\mu)q_iq_j + \delta_{ij}\mu q_j 
\eeann
In this channel with probability $\mu$ the same Pauli rotation is applied to the two successive input and with probability $1-\mu$ the two rotations are independent. It was shown \cite{D2006} that there exists threshold value of the memory $\mu$ above which  maximally entangled input states maximize the two qubit capacity of this channel whereas below the threshold value the product states maximize the capacity.

In the quantum memory channel that we study in this paper the channel transition probabilities are Markov  and for simplicity we look at the special case with the transition matrix $E = \begin{pmatrix}\frac{(1+\mu)}{2}& \frac{(1-\mu)}{2}\\ \frac{(1-\mu)}{2}& \frac{(1+\mu)}{2} \end{pmatrix}$ where $\mu \in (-1,1)$. This quantum memory  channel can be parameterized by three parameters: the memory $\mu$ of the Markov chain, $a = x_0^0 + x_1^0$ and $d = x_0^0 - x_1^0$  and henceforth we refer to it as $\Gamma_2^{\mu,a,d}$. To find the 2-qubit capacity of the channel we look at the action of the channel on a general pure input state of two qubit's
\beann
\ket{\psi_{\theta,\phi}} &=& \frac{\cos\theta\ket{00}+ e^{i\phi}\sin\theta\ket{11}}{\sqrt{2}}\\
\rho_{\theta,\phi}        &=& \ket{\psi_{\theta,\phi}}\bra{\psi_{\theta,\phi}}
\eeann
The channel $\Gamma_2^{\mu,a,d}$ is a covariant channel \cite{H2002}, that is, we have the relation
\bea
\label{eq:covariantch1}
\Gamma_2^{\mu,a,d}\big((\sigma_i\otimes\sigma_j) \rho_{\theta,\phi} (\sigma_i\otimes\sigma_j)\big) = (\sigma_i\otimes\sigma_j)  \Gamma_2^{\mu,a,d}(\rho_{\theta,\phi} ) (\sigma_i\otimes\sigma_j)
\eea
We also have the identity
\bea
\label{eq:covariantch2}
\sum_{ij}\big((\sigma_i\otimes\sigma_j) \rho_{\theta,\phi}  (\sigma_i\otimes\sigma_j)\big) = \unity
\eea
Thus for equiprobable ensemble of states $\big((\sigma_i\otimes\sigma_j) \rho_{\theta,\phi}  (\sigma_i\otimes\sigma_j)\big)$ we can see that from the capacity formula (\ref{eq:qmccapacity}) the first term is equal to $\log 2$ due to equation (\ref{eq:covariantch1})and the second term due to equation (\ref{eq:covariantch1}) is equal to $S(\Gamma_2^{\mu,a,d}(\rho))$. Thus we get
that for a two qubit channel, 2-use capacity is given by
\bea
\label{eq:qmctwoqbitcapacity}
C_2(\Gamma_2^{\mu,a,d}) = 1 - \min_{\theta,\phi} S(\Gamma_2^{\mu,a,d}(\rho_{\theta,\phi}))
\eea
We show that entangled inputs indeed enhance the capacity of the channel $\Gamma_2^{\mu,a,d}$. In fact, similar to results of \cite{D2006}, depending on the parameters we either have the maximally entangled input states or the product input states that maximize capacity for the channel with two inputs. Define 
\bea
f(\Gamma_2^{\mu,a,d}) = |a^2+\mu d^2| -2|a|
\eea
Our main result is
\begin{theorem}
For any channel $\Gamma_2^{\mu,a,d}$, if $ f(\Gamma_2^{\mu,a,d}) \geq 0$, then the two qubit capacity of the quantum memory channel $C_2(\Gamma_2^{\mu,a,d})$ is achieved by the maximally entangled input states otherwise it is achieved at the product input state.
\end{theorem}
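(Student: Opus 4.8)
The plan is to use the Pauli (in fact depolarizing) structure of $\Gamma_2^{\mu,a,d}$ to reduce the minimization in (\ref{eq:qmctwoqbitcapacity}) to a one–parameter problem, and then to read the claimed dichotomy off a single discriminant. Throughout write $\alpha,\beta\in\{0,1,2,3\}$ for Pauli indices, keeping $i,j\in\{0,1\}$ for the Markov states. First I would record that each depolarizing channel has the Kraus form $\Lambda_i(\rho)=\sum_\alpha r_\alpha^{(i)}\sigma_\alpha\rho\sigma_\alpha$ with $r_0^{(i)}=\tfrac{1+3x_i^0}{4}$ and $r_1^{(i)}=r_2^{(i)}=r_3^{(i)}=\tfrac{1-x_i^0}{4}$, so that $\Gamma_2^{\mu,a,d}=\sum_{ij}\gamma_i p_{ij}\,\Lambda_i\otimes\Lambda_j$ is a two–qubit Pauli channel $\Gamma_2^{\mu,a,d}(\rho)=\sum_{\alpha,\beta}Q_{\alpha\beta}(\sigma_\alpha\otimes\sigma_\beta)\rho(\sigma_\alpha\otimes\sigma_\beta)$ with $Q_{\alpha\beta}=\sum_{ij}\gamma_i p_{ij}r_\alpha^{(i)}r_\beta^{(j)}$; here $\gamma=(\tfrac12,\tfrac12)$ since $E$ is doubly stochastic, and $r_0^{(i)}+r_3^{(i)}=\alpha_i^0$, $2r_1^{(i)}=\alpha_i^1$ in the paper's notation.

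Next I would block–decompose the output. The input $\rho_{\theta,\phi}$ is supported on $V_+=\mathrm{span}\{\ket{00},\ket{11}\}$; of the sixteen conjugations $(\sigma_\alpha\otimes\sigma_\beta)\rho_{\theta,\phi}(\sigma_\alpha\otimes\sigma_\beta)$ the eight ``even'' ones preserve $V_+$ and the eight ``odd'' ones map it into $V_-=\mathrm{span}\{\ket{01},\ket{10}\}$, so $\Gamma_2^{\mu,a,d}(\rho_{\theta,\phi})=P_+\oplus P_-$ is block diagonal with $2\times2$ blocks, which I would compute by tracking the scalar each Pauli pair assigns to $\ket{00}$ and $\ket{11}$. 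Two cancellations, both forced by $r_1^{(i)}=r_2^{(i)}=r_3^{(i)}$ together with stationarity $\sum_i\gamma_i p_{ij}=\gamma_j$, collapse the computation: (i) the off–diagonal of $P_-$ vanishes and its two diagonal entries coincide, so $P_-=\tfrac{s_-}{2}\unity_{V_-}$ with $s_-=\Tr P_-$ a constant independent of $\theta,\phi$; and (ii) the off–diagonal of $P_+$ equals $A\,e^{-i\phi}\cos\theta\sin\theta$ with $A=Q_{00}+Q_{33}-Q_{03}-Q_{30}=\sum_{ij}\gamma_i p_{ij}x_i^0 x_j^0$, whose modulus does not involve $\phi$. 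Writing $s_+=\Tr P_+=1-s_-$ and $W_1,W_2$ for the (real, nonnegative) diagonal entries of $P_+$ at $\theta=0$, the $2\times2$ eigenvalue formula gives the four output eigenvalues
\be
\tfrac{s_+}{2}\pm\tfrac12 r(\theta),\qquad \tfrac{s_-}{2},\quad\tfrac{s_-}{2},\qquad r(\theta)^2=(W_1-W_2)^2\cos^2 2\theta+A^2\sin^2 2\theta ,
\ee
and a direct evaluation with the given $E$ yields $W_1-W_2=\sum_i\gamma_i x_i^0=\tfrac a2$ and $A=\tfrac{a^2+\mu d^2}{4}$.

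Then comes the optimization. Since $P_-$ contributes a constant to $S\big(\Gamma_2^{\mu,a,d}(\rho_{\theta,\phi})\big)$, the output entropy depends on $(\theta,\phi)$ only through $r(\theta)$, and $r\mapsto -\tfrac{s_++r}{2}\log\tfrac{s_++r}{2}-\tfrac{s_+-r}{2}\log\tfrac{s_+-r}{2}$ is strictly decreasing in $r\in[0,s_+]$; hence by (\ref{eq:qmctwoqbitcapacity}) the capacity $C_2(\Gamma_2^{\mu,a,d})$ is attained exactly where $r(\theta)^2$ is largest. Writing $r(\theta)^2=A^2+\big((W_1-W_2)^2-A^2\big)\cos^2 2\theta$ with $\cos^2 2\theta$ ranging over $[0,1]$, the maximum is at $\cos^2 2\theta=0$, i.e.\ $\theta=\pi/4$ (the maximally entangled inputs), when $A^2\ge(W_1-W_2)^2$, and at $\cos^2 2\theta=1$, i.e.\ $\theta=0$ (the product inputs), when $A^2\le(W_1-W_2)^2$. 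Finally $16\big(A^2-(W_1-W_2)^2\big)=(a^2+\mu d^2)^2-4a^2=\big(|a^2+\mu d^2|-2|a|\big)\big(|a^2+\mu d^2|+2|a|\big)$ has the same sign as $f(\Gamma_2^{\mu,a,d})$, which is exactly the stated condition.

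The genuine work is entirely in the second step: one must carefully account for the phases each of the sixteen Pauli pairs assigns to $\ket{00}$ and $\ket{11}$ and verify the two cancellations (i) and (ii), after which the problem is one–dimensional. It is also worth recording the elementary bounds $s_+\ge|A|$ and $s_+\ge|W_1-W_2|$ (for instance via $s_+=\tfrac{1+A}{2}$ together with $-\tfrac13\le A\le1$), so that the eigenvalues listed above stay nonnegative; this is automatic since they are the spectrum of a density matrix, but it is reassuring to see it directly from the formulas.
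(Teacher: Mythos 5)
Your proof is correct and follows essentially the same route as the paper's: compute the output of $\Gamma_2^{\mu,a,d}$ on $\rho_{\theta,\phi}$, observe that the spectrum depends on $\theta$ only through $r(\theta)^2=(\lambda_{0,0}-\lambda_{1,1})^2\cos^2 2\theta+c^2\sin^2 2\theta$ (your $W_1-W_2=a/2$ and $A=(a^2+\mu d^2)/4$ are exactly the paper's $\lambda_{0,0}-\lambda_{1,1}$ and $c$), and note that the output entropy is decreasing in $r$, so the dichotomy is governed by the sign of $c^2-(\lambda_{0,0}-\lambda_{1,1})^2$, i.e.\ of $f$. The only differences are cosmetic: the paper gets the off-diagonal term directly from $\Lambda_a\otimes\Lambda_b(\ket{00}\bra{11})=x_a^0x_b^0\ket{00}\bra{11}$ rather than by tracking the sixteen Pauli conjugations, and you spell out the monotonicity of the entropy in $r$ more explicitly than the paper does.
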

\begin{proof}
According to equation (\ref{eq:qmctwoqbitcapacity}) to find the capacity of the channel $\Gamma_2^{\mu,a,d}$ we need to look for input states that minimize the output entropy. The action of $\Gamma_2^{\mu,a,d}$ on the input pure states $\rho_{\theta,\phi}$ is given by
\beann
\Gamma_2^{\mu,a,d}(\rho_{\theta,\phi}) = \begin{pmatrix} \alpha & 0 & 0 & \delta\\ 0 & \gamma & 0 & 0\\ 0 & 0 & \gamma & 0\\ \delta^* & 0 & 0 & \beta \end{pmatrix}
\eeann
The diagonal elements $\alpha ,\beta$ and $\gamma$ are given by
\begin{equation}
\label{eq:eigrelation}
\begin{aligned}
\alpha &= \cos^2\theta \lambda_{0,0} + \sin^2\theta \lambda_{1,1}\\
\beta  &= \sin^2\theta \lambda_{0,0} + \cos^2\theta \lambda_{1,1}\\
\gamma &= \lambda_{0,1} = \lambda_{1,0}
\end{aligned}
\end{equation}
where $\lambda_{0,0}$, $\lambda_{0,1}$, $\lambda_{1,0}$ and $\lambda_{0,0}$ are given by equation (\ref{eq:Chouteigenvalues1}) applied to the case $n=2$. That is,
\begin{small}
\begin{equation}
\begin{aligned}
\lambda_{0,0} &= \frac{(1+\mu)}{4}\Big[\frac{(1+x_0^0)(1+x_0^0)}{4}+\frac{(1+x_1^0)(1+x_1^0)}{4}\Big] + \frac{(1-\mu)}{4}\Big[\frac{2(1+x_0^0)(1+x_1^0)}{4}\Big]\\
\lambda_{0,1} &= \lambda_{1,0}=  \frac{(1+\mu)}{4}\Big[\frac{(1+x_0^0)(1-x_0^0)}{4}+\frac{(1+x_1^0)(1-x_1^0)}{4}\Big] + \\&\qquad\qquad\qquad\qquad\qquad\qquad \frac{(1-\mu)}{4}\Big[\frac{(1+x_0^0)(1-x_1^0)}{4}+\frac{(1-x_0^0)(1+x_1^0)}{4}\Big]\\
\lambda_{1,1} &= \frac{(1+\mu)}{4}\Big[\frac{(1-x_0^0)(1-x_0^0)}{4}+\frac{(1-x_1^0)(1-x_1^0)}{4}\Big] + \frac{(1-\mu)}{4}\Big[\frac{2(1-x_0^0)(1-x_1^0)}{4}\Big]
\end{aligned}
\end{equation}
\end{small}
To determine the off diagonal element $\delta$ we consider the action of the tensor product of depolarizing channels $\Lambda_a$ and $\Lambda_b$
\beann
\Lambda_a\otimes\Lambda_b(A) = x_a^0x_b^0 A + x_a^0(1-x_b^0)\text{Tr}_1(A) + (1-x_a^0)x_b^0 \text{Tr}_2(A) + \frac{\unity\otimes\unity}{4}\text{Tr}(A)
\eeann
Therefore
\beann
\Lambda_a\otimes\Lambda_b(\ket{00}\bra{11}) = x_a^0x_b^0 \ket{00}\bra{11}
\eeann
From this we get
\bea
\delta = ce^{i\phi}\cos\theta\sin\theta = \frac{ce^{i\phi}\sin2\theta}{2}
\eea
where
\beann
c = \frac{1}{4}\big((1+\mu)(x_0^0)^2 + (1+\mu)(x_1^0)^2 +2(1-\mu)x_0^0x_1^0\big)
\eeann
The four eigenvalues of the output state counting multiplicities are 
\bea
\label{eq:Chouteigenvalues2}
\Big\{\frac{(\alpha + \beta) + \sqrt{(\alpha-\beta)^2 + 4|\delta|^2}}{2},\gamma,\gamma ,\frac{(\alpha + \beta) - \sqrt{(\alpha-\beta)^2 + 4|\delta|^2}}{2}\Big\}
\eea
with $\alpha, \beta,\gamma$ and $\delta$ given by equation (\ref{eq:eigrelation}). From equation (\ref{eq:eigrelation})we also have
\beann
\alpha + \beta &=& \lambda_{0,0} + \lambda_{1,1}\\
\alpha - \beta &=& \cos2\theta(\lambda_{0,0} -\lambda_{1,1})\\
\eeann
and therefore
\beann
\sqrt{(\alpha-\beta)^2 + 4|\delta|^2} = \sqrt{(\lambda_{0,0} -\lambda_{1,1})^2\cos^2 2\theta + c^2\sin^2 2\theta }
\eeann
If $(\lambda_{0,0} -\lambda_{1,1})^2 \geq c^2$ then $\sqrt{(\alpha-\beta)^2 + 4|\delta|^2}$  is maximized when $\theta =0$ else if $(\lambda_{0,0} -\lambda_{1,1})^2 \leq c^2$ then it is maximized when $\theta = \frac{\pi}{4}$. Correspondingly if $(\lambda_{0,0} -\lambda_{1,1})^2 \geq c^2$ the output entropy is minimized (and hence channel capacity is maximized) when $\theta = 0$ (product states) and if $(\lambda_{0,0} -\lambda_{1,1})^2 \leq c^2$ then output entropy is minimized (and hence channel capacity is maximized) when $\theta = \frac{\pi}{4}$(maximally entangled states). A simple calculation gives that the condition $(\lambda_{0,0} -\lambda_{1,1})^2 \leq c^2$ is the same as $f(\Gamma_2^{\mu,a,d}) \geq 0$. 
\end{proof}
The output eigenvalues that correspond to the minimum output entropy when $f(\Gamma_2^{\mu,a,d}) \leq 0$  are obtained by substituting $\theta = 0$ in equation (\ref{eq:Chouteigenvalues2}) and as expected we get the eigenvalues to be $\{\lambda_{0,0},\lambda_{0,1},\lambda_{1,0},\lambda_{1,1}\}$.
The two qubit capacity if $f(\Gamma_2^{\mu,a,d}) \leq 0$ (product state inputs) is 
\bea
C_2(\Gamma_2^{\mu,a,d}) = 1 + \frac{\lambda_{0,0}\log\lambda_{0,0} +2\lambda_{0,1}\log2 \lambda_{0,1}+ \lambda_{1,1}\log\lambda_{1,1}}{2}
\eea
A straightforward calculation shows that
\beann
\frac{(\alpha + \beta) - \sqrt{(\alpha-\beta)^2 + 4|\delta|^2}}{2} = \gamma = \lambda_{0,1}
\eeann
and hence when $f(\Gamma_2^{\mu,a,d}) \geq 0$  the output eigenvalues are
\beann
\{1-3\lambda_{0,1},\lambda_{0,1},\lambda_{0,1},\lambda_{0,1}\}
\eeann
The two qubit capacity if $f(\Gamma_2^{\mu,a,d}) \geq 0$ (maximally entangled inputs) is 
\bea
C_2(\Gamma_2^{\mu,a,d}) = 1 + \frac{(1-3\lambda_{0,1})\log(1-3\lambda_{0,1})+ 3\lambda_{0,1}\log\lambda_{0,1}}{2}
\eea
\begin{figure}[H]
\centering
{
\includegraphics[height = 6cm,width=9cm]{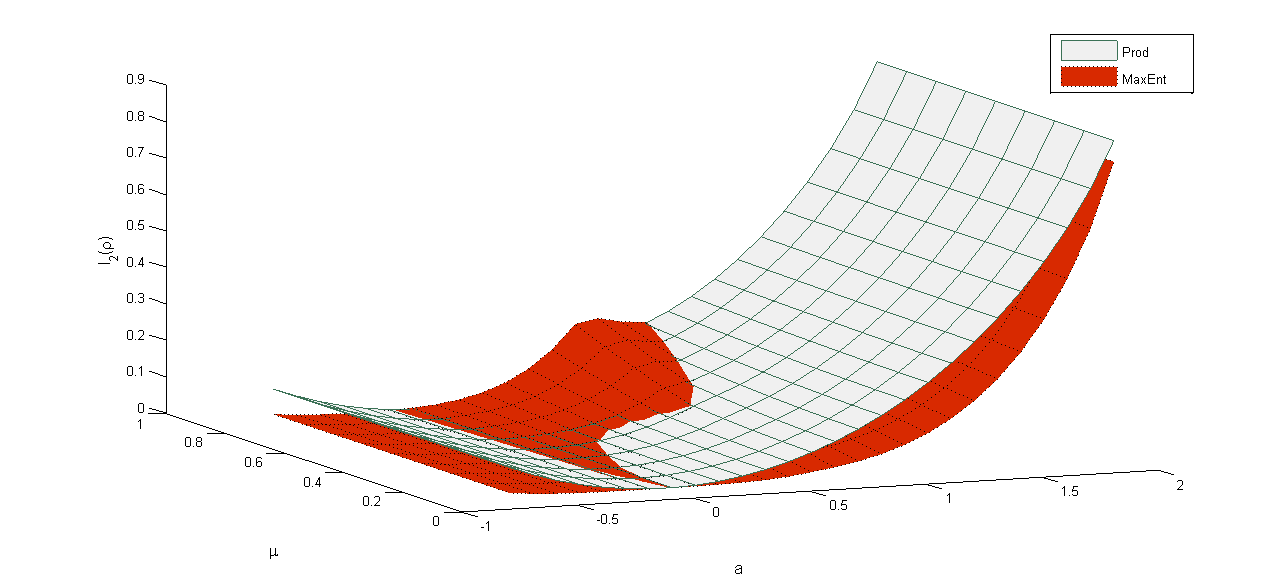}
\caption{The plot of two qubit mutual information against $a$ and $\mu$ for product and maximally entangled states. The parameter d is chosen to be the maximum value such that the depolarizing channel parameters are still in the range $[-1/3,1]$. }
\label{fig:qmc1}
}
\end{figure}

\begin{figure}[H]
\centering
{
\subfloat[][The plot of the two qubit mutual information $I_2(\rho)$ with change in the Markov memory  $\mu$ for the product states and the maximally entangled state. The other parameters are fixed to $a=\frac{1}{3}$ and $d = -1$. The two branches are corresponding to positive and negative values of $\mu$.]{
\includegraphics[width=5cm]{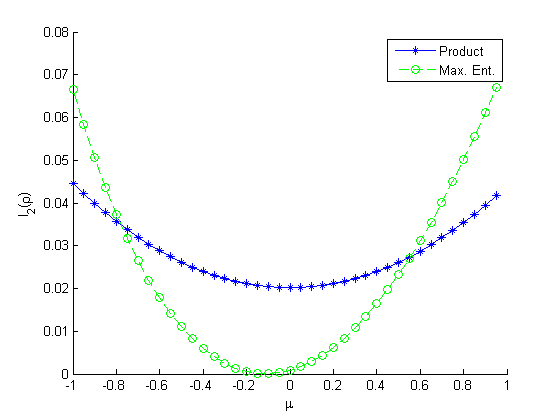}\qquad \includegraphics[width=5cm]{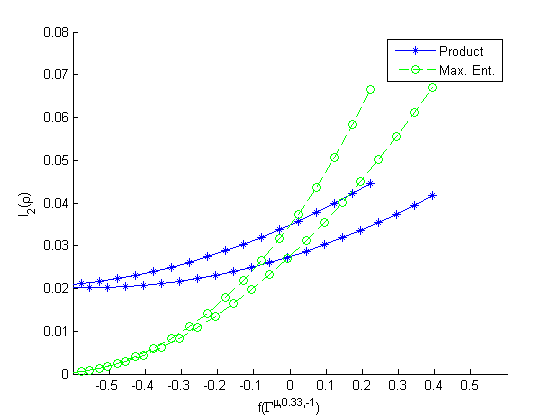}}
\newline
\subfloat[Subfigure 2 list of figures text][The plot of the two qubit mutual information $I_2(\rho)$ with change in the parameter $a=x_0^0+x_1^0$ for the product states and the maximally entangled state. The other parameters are fixed to $\mu=\frac{2}{3}$ and $d = -1$. ]{
\includegraphics[width=5cm]{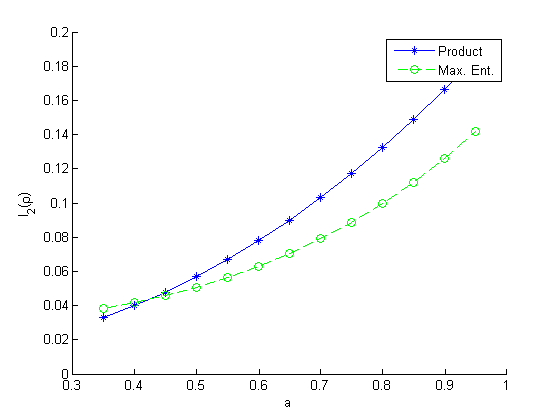}\qquad
\includegraphics[width=5cm]{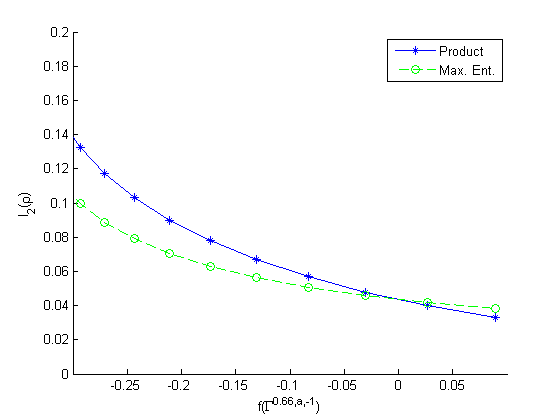}}
\newline
\subfloat[Subfigure 1 list of figures text][The plot of the two qubit mutual information $I_2(\rho)$ with change in the parameter $d=x_0^0-x_1^0$ for the product states and the maximally entangled state. The other parameters are fixed to $\mu=\frac{2}{3}$ and $a = \frac{1}{3}$.]{
\includegraphics[width=5cm]{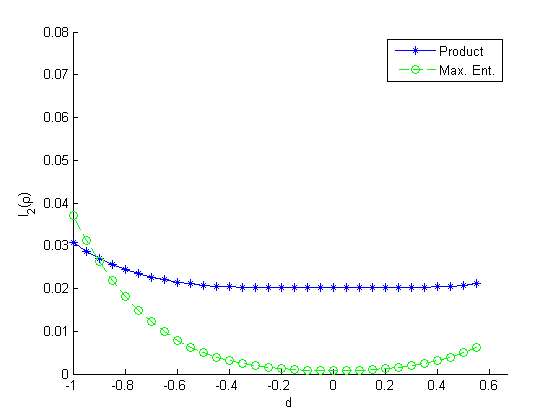}\qquad
\includegraphics[width=5cm]{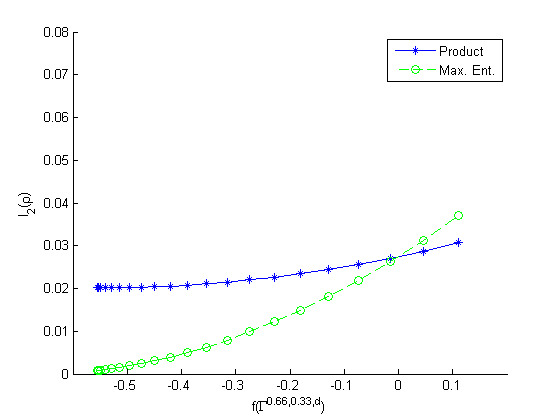}}
\caption{Plot of $I_2(\rho)$ for product and maximally entangled input states versus one of parameters $\mu,a,d$ by keeping the other two parameters fixed. The corresponding plot of $f(\Gamma_2^{\mu,a,d})$ is also shown. The capacity $C_2(\rho)$ is achieved either by the product states or the maximally entangled states. The crossover of the maximum of $I_2(\rho)$ from the iput states being maximally entangled to product states happens when  $f(\Gamma_2^{\mu,a,d}) =0.$ }
\label{fig:qmc2}
}
\end{figure}
Figures (\ref{fig:qmc1}) and (\ref{fig:qmc2}) shows the plot of the mutual information 
\beann
I_n(\rho) =S\Big(\sum_i p_i^{(n)}\Gamma_n(\rho_i^{(n)})\Big)- \sum_i p_i^{(n)}S\Big(\Gamma_n(\rho_i^{(n)})\Big)
\eeann
for the product states and maximally entangled states versus the parameters of the channel.
\section{Classical capacity of the channel}
\label{sec:ConjClassicalCap}
In this section we provide numerical evidence that the product state capacity of the quantum memory channel $\Gamma$ is equal to the classical capacity. For quantum memory channels that include a periodic channel with depolarizing branches and a convex combination of depolarizing channels the classical capacity was shown to be equal to the product state capacity \cite{MD2009}. We conjecture that for the quantum memory channel studied in this paper
\begin{conjecture}
\label{conj:Chconjecture}
\bea
C_{class}(\Gamma) = C_{prod}(\Gamma)
\eea 
\end{conjecture}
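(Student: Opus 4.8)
The plan is to reduce the statement to an asymptotic inequality for minimum output entropies and then to attack that inequality by exploiting the finite correlation length of the memory. By the capacity formula for forgetful channels \cite{DD2009,KW2005}, $C_{class}(\Gamma)=\lim_n\frac1n\chi^*(\Gamma_n)$. Since each depolarizing map intertwines conjugation by Pauli operators, the covariance identities (\ref{eq:covariantch1})--(\ref{eq:covariantch2}) used for two channel uses hold verbatim for $n$ uses, and the same covariance argument that produced (\ref{eq:qmctwoqbitcapacity}), now for $n$ inputs, gives $C_{class}(\Gamma)=1-\lim_n\frac1n S_{\mathrm{min}}(\Gamma_n)$, where $S_{\mathrm{min}}(\Gamma_n)=\min_{\ket\psi}S\!\big(\Gamma_n(\ket\psi\bra\psi)\big)$, the minimum over pure $n$-qubit inputs being without loss by concavity of $S$. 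Comparing with the product-state formula of \cite{WAFP2009}, $C_{prod}=1-h$ with $h=\lim_n\frac1n S(\Gamma_n(\rho^{(n)}_{\vec i}))$ the entropy rate of the hidden Markov process (\ref{eq:Chouteigenvalues1}), the conjecture is \emph{equivalent} to $\lim_n\frac1n S_{\mathrm{min}}(\Gamma_n)=h$. As product basis states are admissible inputs, $S_{\mathrm{min}}(\Gamma_n)\le S(\Gamma_n(\rho^{(n)}_{\vec i}))$, so ``$\le h$'' holds trivially and the whole content is the lower bound $\liminf_n\frac1n S_{\mathrm{min}}(\Gamma_n)\ge h$: entangled inputs may not lower the output entropy below the product-basis value by more than $o(n)$.

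A cheap first estimate is obtained by writing
\be
\Gamma_n=\sum_{\vec y}\gamma_{y_1}p_{y_1y_2}\cdots p_{y_{n-1}y_n}\;\Lambda_{y_1}\otimes\cdots\otimes\Lambda_{y_n},
\ee
i.e.\ as a Markov mixture of tensor products of depolarizing channels, applying concavity of the von Neumann entropy, and then using King's additivity of the minimum output entropy of depolarizing channels \cite{K2010} on each term. This gives $S_{\mathrm{min}}(\Gamma_n)\ge n\overline H$, where $\overline H=\tfrac12 H_2(\tfrac{1+x_0^0}{2})+\tfrac12 H_2(\tfrac{1+x_1^0}{2})$ ($H_2$ the binary entropy, $\tfrac12$ the stationary weights of $E$). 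However, writing $Y_t$ for the hidden Markov branch and $B_t$ for the observed bit-flip at step $t$ in the language of \cite{WAFP2009}, one has $\overline H=\lim_n\frac1n H(B_1,\dots,B_n\mid Y_1,\dots,Y_n)$, while $h=\lim_n\frac1n H(B_1,\dots,B_n)\ge\overline H$, with strict inequality whenever $d=x_0^0-x_1^0\ne0$. Thus concavity alone only yields $C_{class}\le1-\overline H$, which is strictly weaker than $C_{prod}=1-h$; the quantity that must still be recovered, $h-\overline H$, is exactly the rate at which the output reveals which depolarizing branch acted, and it vanishes in any argument that conditions on the hidden Markov path. So a more delicate mechanism is needed.

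The mechanism I would try to set up uses the ergodicity of $E$ (equivalently, forgetfulness of $\Gamma$): the action of $\Gamma_n$ on the channel uses $t\in(jL,(j+1)L]$ depends on everything before time $jL$ only through a memory whose influence decays exponentially in the offset. Grouping the $n=mL$ uses into $m$ blocks, one would lower-bound $S(\Gamma_n(\rho))$ by the chain rule over blocks and argue that the conditional output entropy of block $j$ given the outputs of the earlier blocks is within $o(L)$ of an intrinsic per-block contribution $L\,h_L$, with $h_L\to h$ as $L\to\infty$, the errors being controlled by the mixing rate of $E$ together with Fannes-type continuity; letting $n\to\infty$ and then $L\to\infty$ would conclude. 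A clean base case and sanity check is $\mu=0$, where the chain is i.i.d., $\Gamma_n=\Phi^{\otimes n}$ with $\Phi=\tfrac12\Lambda_0+\tfrac12\Lambda_1$ again depolarizing, and the equality follows at once from King's additivity and from \cite{MD2009}; one could also try to propagate the equality to $\mu\ne0$ perturbatively. A possibly useful reformulation is that $\Gamma_n$ is itself a single $n$-qubit Pauli channel whose Pauli distribution is a hidden Markov measure on $\{0,1,2,3\}^n$, which might let one bound $S_{\mathrm{min}}(\Gamma_n)$ via the symplectic structure of stabilizer-type inputs.

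The step I expect to be the real obstacle is making the block estimate capture $h$ rather than merely $\overline H$. King's multiplicativity of maximal $p$-norms relies essentially on an exact tensor-product structure, for which there is no known analogue for Markov-correlated mixtures of depolarizing channels, and $\Gamma_n$ is not exchangeable, so a de Finetti / permutation reduction is unavailable. Moreover the two-qubit theorem above shows that at finite $n$ the Holevo quantity $\chi^*(\Gamma_n)$ is genuinely superadditive (maximally entangled inputs strictly beat the product basis when $f(\Gamma_2^{\mu,a,d})\ge0$), so the proof must establish that this coherent advantage \emph{saturates}, i.e.\ is $O(1)$ rather than $\Omega(n)$. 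Quantifying the saturation — plausibly via the matrix-product / finitely-correlated form of $\Gamma_n$, or via a channel analogue of the Shannon--McMillan--Breiman estimates used to compute hidden Markov entropy rates — is precisely what is missing, which is why the statement is recorded here only as a conjecture.
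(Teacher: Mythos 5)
The statement you were asked to prove is labelled a \emph{conjecture} in the paper, and the paper does not prove it: it offers only numerical evidence plus the weaker bound $C_{class}(\Gamma)\leq C_{prod}(\Gamma)+\bar H(X)$ of its Theorem 3.2. Your submission is likewise not a proof and is candid about that, so there is no error to flag in the sense of a step that silently fails --- but it also cannot stand as a proof of the conjecture; the problem remains open in both your write-up and the paper. Within those limits your analysis is sound and actually tracks the paper's partial result closely. The covariance reduction $\chi^*(\Gamma_n)=n-S_{\min}(\Gamma_n)$ is correct (the paper uses it for $n=2$), the easy direction $\limsup_n\frac1nS_{\min}\le h$ is right, and your ``cheap estimate'' $S_{\min}(\Gamma_n)\ge n\overline H$ with $\overline H=\tfrac12H_2(\tfrac{1+x_0^0}{2})+\tfrac12H_2(\tfrac{1+x_1^0}{2})$ uses exactly the two ingredients of the paper's proof of Theorem 3.2 (concavity/mixing-entropy bounds plus King's additivity for tensor products of depolarizing channels). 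In fact, since $h\le\overline H+\bar H(X)$, your bound $C_{class}\le 1-\overline H$ is at least as strong as the paper's $C_{prod}+\bar H(X)$, and your diagnosis that the uncaptured quantity is $h-\overline H=\lim_n\frac1nI(B^n;Y^n)$ --- the rate at which the outputs reveal the hidden depolarizing branch, which any argument conditioning on the Markov path necessarily discards --- is precisely where the paper's argument also stops.

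The one place where you should be careful not to oversell is the blocking/forgetfulness strategy: the claim that the conditional output entropy of a block given earlier blocks is within $o(L)$ of $Lh_L$ with $h_L\to h$ is exactly the hard step, and nothing in the forgetfulness of $\Gamma$ by itself rules out an $\Omega(n)$ coherent advantage from inputs entangled \emph{across} blocks; your own two-qubit theorem shows such an advantage exists at finite $n$. Your $\mu=0$ base case does check out ($\tfrac12\Lambda_0+\tfrac12\Lambda_1$ is again depolarizing with parameter $a/2$, and $1-H_2(\tfrac{1-a/2}{2})$ matches $C_{prod}$ there), but a perturbative continuation to $\mu\ne0$ would need a uniform-in-$n$ continuity estimate that is not available. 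In short: a reasonable and internally consistent research plan, correctly identifying the obstruction, but not a proof --- consistent with the statement's status as a conjecture.
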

It is clear that $C_{class}(\Gamma) \geq  C_{prod}(\Gamma)$. Although we are unable to show the reverse inequality we give the following bound
\begin{theorem}
\beann
C_{class}(\Gamma) \leq C_{prod}(\Gamma) + \bar{H}(X)  
\eeann
where $\bar{H}(X)$ is the entropy rate of the Markov process with transition matrix $E$.
\end{theorem}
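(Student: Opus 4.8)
The plan is to bound $\chi^*(\Gamma_n)$ from above by ``freezing'' the memory: once the trajectory $\vec m=(m_1,\dots,m_n)$ of the Markov chain is fixed, $\Gamma_n$ reduces to the memoryless tensor product $\Lambda_{\vec m}:=\Lambda_{m_1}\otimes\cdots\otimes\Lambda_{m_n}$ of qubit depolarizing channels, for which additivity of the minimum output entropy (equivalently, of the Holevo quantity) is known \cite{K2010}. Then I would relate the resulting ``frozen'' bound back to $C_{prod}(\Gamma)$ through a classical chain-rule estimate that is exactly what costs the extra $\bar H(X)$.

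First, by linearity of the channel (the explicit formula for $\Gamma_n$ on product inputs extends to arbitrary, possibly entangled, inputs) one writes $\Gamma_n(\rho)=\sum_{\vec m} q_{\vec m}\,\Lambda_{\vec m}(\rho)$ with $q_{\vec m}=\gamma_{m_1}p_{m_1m_2}\cdots p_{m_{n-1}m_n}$ the law of the Markov path, a genuine probability distribution over $\{0,1\}^n$. Bounding the first term of the Holevo quantity by $\log 2^n=n$ and the average output entropy in the second term from below by $S_{\min}(\Gamma_n):=\min_\rho S(\Gamma_n(\rho))$ gives $\chi^*(\Gamma_n)\le n-S_{\min}(\Gamma_n)$. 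Concavity of $S$ then yields, for \emph{every} input $\rho$, $S(\Gamma_n(\rho))\ge\sum_{\vec m}q_{\vec m}S(\Lambda_{\vec m}(\rho))\ge\sum_{\vec m}q_{\vec m}S_{\min}(\Lambda_{\vec m})$, and additivity of the minimum output entropy for depolarizing channels gives $S_{\min}(\Lambda_{\vec m})=\sum_{j}s_{m_j}$, where $s_i:=S_{\min}(\Lambda_i)=-\alpha_i^0\log\alpha_i^0-\alpha_i^1\log\alpha_i^1$ is the one-qubit minimum output entropy ($\alpha_i^0,\alpha_i^1$ as in Section \ref{sec:ChannelConstruction}; a direct computation gives $s_i=h(\alpha_i^1)$ with $h$ the binary entropy). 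Averaging over paths and using ergodicity of the chain ($\Pr[X_j=i]\to\pi_i$, the stationary law) one gets $\tfrac1n\sum_{\vec m}q_{\vec m}\sum_j s_{m_j}\to\bar s:=\sum_i\pi_i s_i$, and dividing by $n$ and taking the limit produces the intermediate bound $C_{class}(\Gamma)\le 1-\bar s=\sum_i\pi_i\chi^*(\Lambda_i)$. (Equivalently: disclosing the trajectory of the chain to the receiver cannot decrease $\chi^*$, so $\chi^*(\Gamma_n)\le\sum_{\vec m}q_{\vec m}\chi^*(\Lambda_{\vec m})$, after which additivity of $\chi^*$ and ergodicity give the same conclusion.)

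It remains to check $1-\bar s\le C_{prod}(\Gamma)+\bar H(X)$. By the theorem of \cite{WAFP2009}, $C_{prod}(\Gamma)=1-\lim_n\tfrac1n S(\Gamma_n(\rho^{(n)}_{\vec i}))$, and unpacking (\ref{eq:Chouteigenvalues1}) shows that the eigenvalues $\lambda_{\vec k}$ of $\Gamma_n(\rho^{(n)}_{\vec i})$ are precisely the law of the bit-flip pattern $\vec Y=(Y_1,\dots,Y_n)$, so $S(\Gamma_n(\rho^{(n)}_{\vec i}))=H(Y_1,\dots,Y_n)$ and this limit is the entropy rate of the hidden-Markov process $\vec Y$. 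Since, conditioned on the hidden path $\vec X=\vec m$, the $Y_j$ are independent with $\Pr[Y_j=1\mid X_j=m]=\alpha_m^1$ and $h(\alpha_m^1)=s_m$, the chain rule gives $H(Y^n)\le H(X^n)+H(Y^n\mid X^n)=H(X^n)+\sum_j\mathbb E[s_{X_j}]$; dividing by $n$ and letting $n\to\infty$ yields $1-C_{prod}(\Gamma)\le\bar H(X)+\bar s$, which is the required inequality. Combining the two bounds gives $C_{class}(\Gamma)\le 1-\bar s\le C_{prod}(\Gamma)+\bar H(X)$.

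The only genuine ingredient is the additivity of the minimum output entropy (equivalently, the Holevo capacity) for tensor products of qubit depolarizing channels \cite{K2010}, which is precisely what lets one freeze the path and treat the remaining channel as memoryless; the rest is the eigenvalue bookkeeping of Section \ref{sec:ChannelConstruction}, the one-qubit identity $s_i=h(\alpha_i^1)$, and the elementary chain-rule estimate for hidden-Markov entropy rates. The conceptual obstacle — and the reason this argument falls short of Conjecture \ref{conj:Chconjecture} — is that it effectively hands the receiver the whole channel trajectory, and so it pays the \emph{entire} memory entropy rate $\bar H(X)$ in passing from the true hidden-Markov output-entropy rate to the frozen-path one; a proof of the conjecture would have to avoid this loss.
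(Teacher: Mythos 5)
Your proof is correct and follows essentially the same route as the paper's: condition on the Markov trajectory, use concavity together with King's additivity of the minimum output entropy for tensor products of depolarizing channels to reduce to the frozen-path minimum, and pay the path entropy $H(X^n)$ when reconnecting to the product-state capacity (your chain-rule step $H(Y^n)\le H(X^n)+H(Y^n\mid X^n)$ is the same inequality the paper applies as the upper Holevo bound on the entropy of a mixture). The only difference is organizational: by making the stationary average $\bar{s}=\sum_i\pi_i S_{\min}(\Lambda_i)$ explicit you record the slightly sharper intermediate bound $C_{class}(\Gamma)\le 1-\bar{s}=\sum_i\pi_i\chi^*(\Lambda_i)$, which the paper's chain of inequalities contains implicitly but never states.
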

\begin{proof}
From the capacity formula of equation (\ref{eq:qmccapacity}) it is clear that
\bea
\label{eq:Capinequality}
C_n(\Gamma_n)  \leq 1 - \min_{\rho^{(n)}} \frac{S(\Gamma_n(\rho^{(n)}))}{n}
\eea
Define the quantities
\beann
s_{prod}(\rho_{prod}^{(n)})&:=&\sum_{i_1,...,i_n}\gamma_{i_1} p_{i_1 i_2}...p_{i_{n-1} i_n} S(\Lambda_{i_1}\otimes \cdots \otimes \Lambda_{i_n}(\rho_{prod}^{(n)}))\\
s(\rho^{(n)})&:=&\sum_{i_1,...,i_n}\gamma_{i_1} p_{i_1 i_2}...p_{i_{n-1} i_n} S(\Lambda_{i_1}\otimes \cdots \otimes \Lambda_{i_n}(\rho^{(n)}))\\
s_{prod} &:=& \min_{\rho^{(n)}_{prod}} s_{prod}(\rho_{prod}^{(n)})\\
s        &:=& \min_{\rho^{(n)}} s(\rho^{(n)})
\eeann
For an ensemble $\{p_i, \rho_i\}$ we have the inequality \cite{NC2000}
\beann
\sum_i p_i S( \rho_i) \leq S\big(\sum_i p_i \rho_i\big) \leq \sum_i p_i S( \rho_i) + H(p_i)
\eeann
where $H(p_i)=-\sum_i p_i \log p_i$. Using this inequality we get
\begin{equation}
\label{eq:Chinequality1}
\begin{aligned}
s_{prod}(\rho_{prod}^{(n)}) &\leq  &S(\Gamma_n(\rho_{prod}^{(n)}))&\leq  &s_{prod}(\rho_{prod}^{(n)}) &+ H(X)\\
s(\rho^{(n)})               &\leq  &S(\Gamma_n(\rho^{(n)}))       &\leq  &s(\rho^{(n)})               &+ H(X)
\end{aligned}
\end{equation}
where
\beann
H(X)&= -\sum_{i_1,...,i_n}(\gamma_{i_1} p_{i_1 i_2}...p_{i_{n-1} i_n})\log (\gamma_{i_1} p_{i_1 i_2}...p_{i_{n-1} i_n})
\eeann
Now we claim that 
\begin{claim}
\label{cl:qmcclaim}
\beann
s_{prod} \leq s
\eeann
\end{claim}
Due to claim (\ref{cl:qmcclaim}) and equation (\ref{eq:Chinequality1}) 
\beann
\min_{\rho^{(n)}_{prod}} S(\Gamma_n(\rho_{prod}^{(n)})) \leq s_{prod} + H(X) \leq s + H(X) \leq \min_{\rho^{(n)}} S(\Gamma_n(\rho^{(n)})) + H(X)
\eeann
and therefore,
\beann
\min_{\rho^{(n)}_{prod}} S(\Gamma_n(\rho_{prod}^{(n)}))  \leq \min_{\rho^{(n)}} S(\Gamma_n(\rho^{(n)})) + H(X)
\eeann
Dividing by $n$ and taking the limit as $n\rightarrow\infty$
\beann
\lim_{n\rightarrow\infty} \min_{\rho^{(n)}_{prod}} \frac{S(\Gamma_n(\rho_{prod}^{(n)}))}{n} \leq \lim_{n\rightarrow\infty} \min_{\rho^{(n)}} \frac{S(\Gamma_n(\rho^{(n)}))}{n} + \bar{H}(X)
\eeann
Substituting in equation (\ref{eq:Capinequality}) we get the required result.
\end{proof}
\textbf{Proof of Claim \ref{cl:qmcclaim}}
\begin{proof}
Firstly due to the additivity of the depolarizing channel \cite{K2010} we have that
\bea
\label{eq:Chinequality2}
\min_{\rho^{(n)}_{prod}} S(\Lambda_{i_1}\otimes \cdots \otimes \Lambda_{i_n}(\rho_{prod}^{(n)})) \leq \min_{\rho^{(n)}} S(\Lambda_{i_1}\otimes \cdots \otimes \Lambda_{i_n}(\rho^{(n)}))
\eea
We also have that
\begin{multline}
\label{eq:Chinequality3}
\sum_{i_1,...,i_n}\gamma_{i_1} p_{i_1 i_2}...p_{i_{n-1} i_n}\min_{\rho^{(n)}_{prod}} S(\Lambda_{i_1}\otimes \cdots \otimes \Lambda_{i_n}(\rho^{(n)_{prod}}))\leq\\ \min_{\rho^{(n)}_{prod}} \sum_{i_1,...,i_n}\gamma_{i_1} p_{i_1 i_2}...p_{i_{n-1} i_n} S(\Lambda_{i_1}\otimes \cdots \otimes \Lambda_{i_n}(\rho^{(n)}_{prod}))
\end{multline}
Also, for any compound channel that is a tensor product of depolarizing channels, the minimum output entropy is obtained on the product states of the form $\rho^{(n)}_{\vec{i}}$ given by equation (\ref{eq:Chouteigenvalues1}), thus states of the form $\rho^{(n)}_{\vec{i}}$ right and left side of the equation (\ref{eq:Chinequality3}) are equal and hence
\begin{multline}
\label{eq:Chequality}
\sum_{i_1,...,i_n}\gamma_{i_1} p_{i_1 i_2}...p_{i_{n-1} i_n}\min_{\rho^{(n)}_{prod}} S(\Lambda_{i_1}\otimes \cdots \otimes \Lambda_{i_n}(\rho^{(n)_{prod}})) =\\ \min_{\rho^{(n)}_{prod}} \sum_{i_1,...,i_n}\gamma_{i_1} p_{i_1 i_2}...p_{i_{n-1} i_n} S(\Lambda_{i_1}\otimes \cdots \otimes \Lambda_{i_n}(\rho^{(n)}_{prod}))
\end{multline}
Finally it is also clear that
\bea
\label{eq:Chinequality4}
s \leq \sum_{i_1,...,i_n}\gamma_{i_1} p_{i_1 i_2}...p_{i_{n-1} i_n}\min_{\rho^{(n)}} S(\Lambda_{i_1}\otimes \cdots \otimes \Lambda_{i_n}(\rho^{(n)}))
\eea
Combining equations (\ref{eq:Chinequality2}), (\ref{eq:Chequality}) and (\ref{eq:Chinequality4}) we get the required result.
\end{proof}
We provide figures (\ref{fig:qmc3}),(\ref{fig:qmc4}) and (\ref{fig:qmc5}) in support conjecture \ref{conj:Chconjecture}. In these we look at the mutual information $I_n(\rho)$ for a variety of entangled states including the W-state, GHZ-state and the maximally entangled state and compare it with the mutual information of the product state. For $n=4$ there is a small range of parameters (see figure) where the mutual information of the maximally entangled states is higher than that of the product states. By varying the parameters $\mu$, $a$ and $d$ we find that for $n=6$ and $n=8$ the product states have more mutual information than the entangled states.

\begin{figure}[H]
\centering
\includegraphics[width=10cm]{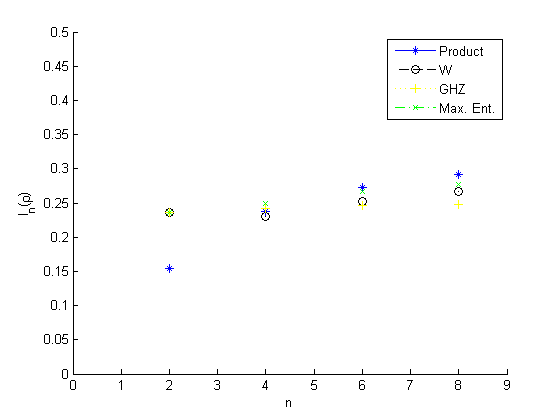}
\caption{Plot shows the mutual information $I_n(\rho)$ for $n=2,4,6,8$ for product states, GHZ-states, W-state and maximally entangled states for the channel $\Gamma_n^{\mu,a,d}$ for $\mu = 0.7$, $\mu=0.9,a = \frac{2}{3}$,$d=\frac{-4}{3}$. The mutual information of the product states exceeds that of the other states for $n=6$ and $n=8$.}
\label{fig:qmc3}
\end{figure}

\begin{figure}[H]
\centering
{
\includegraphics[width=6cm,height = 4cm]{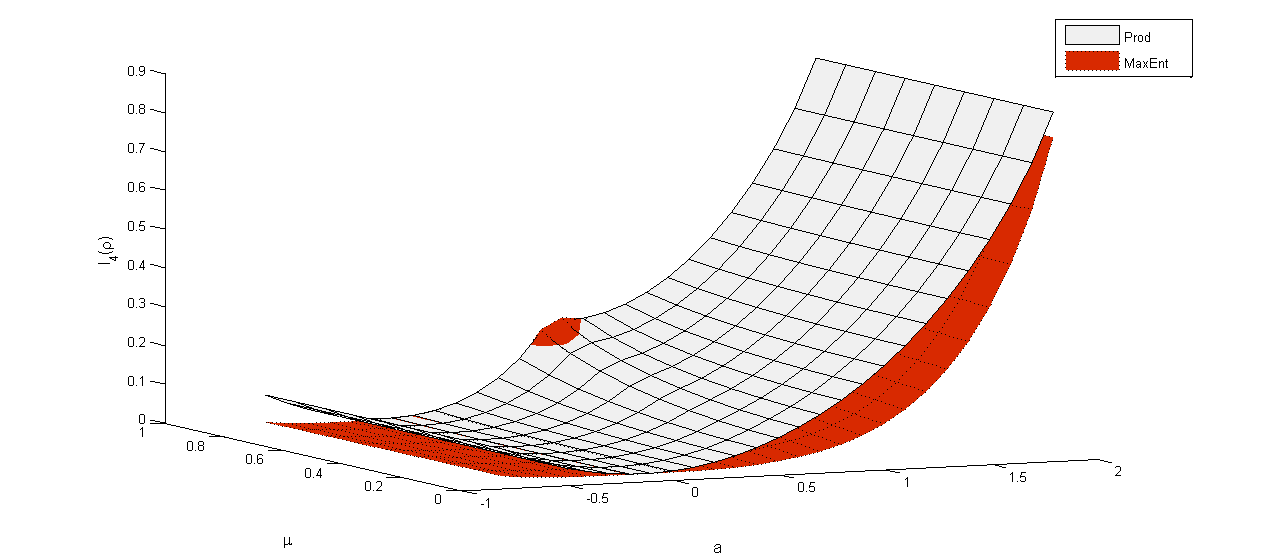} \qquad
\includegraphics[width=5cm]{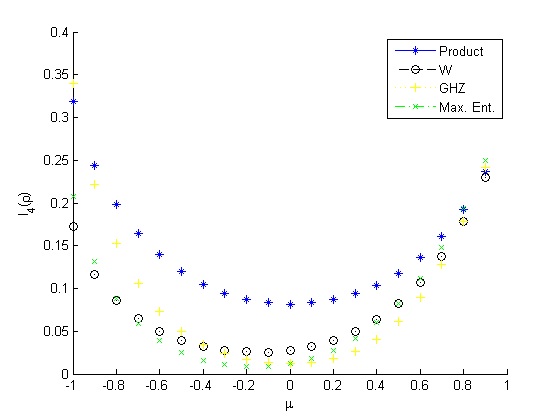}
}
\caption{On the left the plot of mutual information $I_4(\rho)$ against $a$ and $\mu$ for product and maximally entangled states. The parameter d is chosen to be the maximum value such that the depolarizing channel parameters are still in the range $(-1/3,1)$. On the right, plot of $I_4(\rho)$ versus $\mu$ ($a=\frac{1}{3}$, $d =\frac{4}{3}$) for product and different types entangled states including GHZ state, W-state and maximally entangled states.  In a small region of the parameter values the entangled states mutual information exceeds that of the product states.}
\label{fig:qmc4}
\end{figure}

\begin{figure}[H]
\centering
{
\includegraphics[width=6cm,height = 4cm]{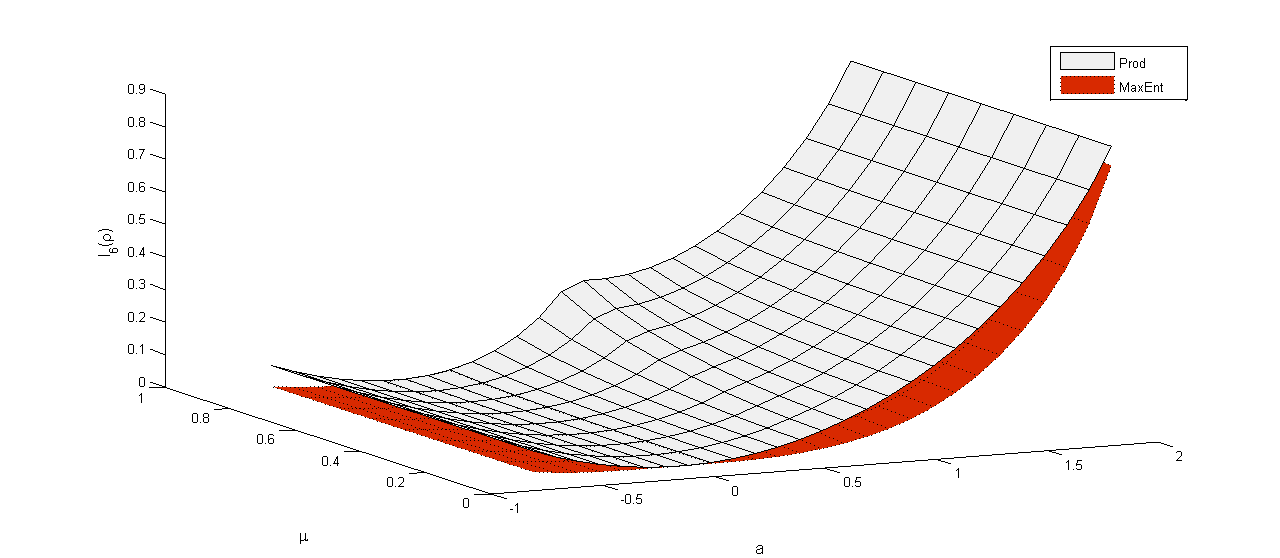} \qquad
\includegraphics[width=5cm]{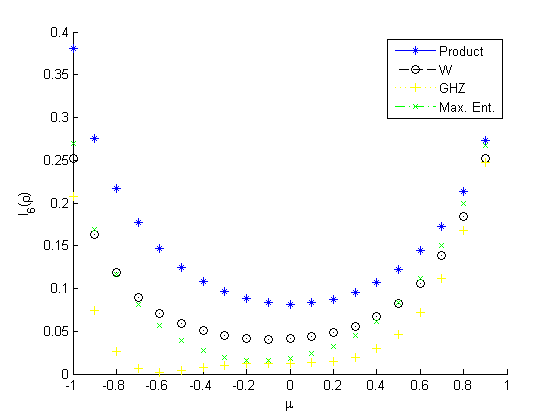}}
\caption{On the left the plot of mutual information $I_6(\rho)$ against $a$ and $\mu$ for product and maximally entangled states. For all the parameter values the product states mutual information exceeds that of the entangled states.}
\label{fig:qmc5}
\end{figure}

\bibliographystyle{ieeetr}
\bibliography{QMChannel}
\end{document}